\newif\ifnotes
\definecolor{asparagus}{rgb}{0.53, 0.66, 0.42}
\definecolor{sacramentostategreen}{rgb}{0.0, 0.3, 0.15}
\definecolor{teal}{rgb}{0.0, 0.5, 0.5}
\definecolor{forestgreen}{rgb}{0.13, 0.6, 0.13}
\newtheorem{theorem}{Theorem}[section]
\newtheorem{definition}[theorem]{Definition}
\newtheorem{lemma}[theorem]{Lemma}
\newtheorem{claim}[theorem]{Claim}
\theoremstyle{remark}
\Crefname{theorem}{Theorem}{Theorems}
\Crefname{claim}{Claim}{Claims}
\Crefname{lemma}{Lemma}{Lemmas}
\Crefname{proposition}{Proposition}{Propositions}
\Crefname{corollary}{Corollary}{Corollaries}
\Crefname{definition}{Definition}{Definitions}
\newcommand{\ECC}{\mathsf{ECC}}
\newcommand{\Had}{\mathsf{Had}}
\newcommand{\DEC}{\mathsf{DEC}}
\newcommand{\iECC}{\mathsf{iECC}}
\newcommand{\poly}{\text{poly}}
\newcommand{\Mod}[1]{\ (\text{mod}\ #1)}
\newcommand{\mes}{\mathsf{mes}}
\newcommand{\ind}{\mathsf{ind}}
\newcommand{\cnt}{\mathsf{cnt}}
\newcommand{\parity}{\mathsf{par}}
\newcommand{\nxt}{\mathsf{next}}
\newcommand{\BAD}{\mathsf{BAD}}
\newcommand{\fin}{\mathsf{fin}}
\newcommand{\bbN}{\mathbb{N}}
\newcommand{\bbZ}{\mathbb{Z}}
\newcommand{\customlabel}[2]{%
   \protected@write \@auxout {}{\string \newlabel {#1}{{#2}{\thepage}{#2}{#1}{}} }%
   \hypertarget{#1}{#2}
}
\newcommand{\protocol}[3]{
    \stepcounter{figure}
    \vspace{0.15cm}
    { \small
    \begin{tcolorbox}[breakable, enhanced, colback=forestgreen!12]
    \begin{center}
    {\bf \underline{Protocol~\customlabel{prot:#2}{\thefigure}: #1}}
    \end{center}
    
    #3
    \end{tcolorbox}
    }
}
\newcounter{casenum}
\newenvironment{caseof}{\setcounter{casenum}{1}}{\vskip.5\baselineskip}
\newcommand{\case}[2]{
    \vskip.5\baselineskip\par\noindent 
    {\bf Case \arabic{casenum}:} {\it #1} \vskip0.1\baselineskip
    \begin{addmargin}[1.5em]{1em}
    #2
    \end{addmargin}
    \addtocounter{casenum}{1}
}
\newcounter{subcasenum}
\newcounter{casenump}
\newcommand{\casep}[2]{
    \ifthenelse{\equal{\value{casenump}}{0}}{
    \vskip.5\baselineskip\par\noindent
    }{}
    {\it Case \arabic{casenump}:} {\it #1}
    \vskip0.1\baselineskip
    \begin{addmargin}[1.5em]{1em}
    #2
    \end{addmargin}
    \addtocounter{casenump}{1}
}
\newcounter{subcasenump}
\begin{document}

\title{Positive Rate Binary Interactive Error Correcting Codes Resilient to $>\frac12$ Adversarial Erasures}
\author{Meghal Gupta \thanks{E-mail:\texttt{meghal@mit.edu}}\\Microsoft Research \and Rachel Yun Zhang\thanks{E-mail:\texttt{rachelyz@mit.edu}. She is supported by an Akamai Presidential Fellowship.}\\MIT}
\date{\today}

\sloppy
\maketitle
\begin{abstract}
An \emph{interactive error correcting code ($\iECC$)} is an interactive protocol with the guarantee that the receiver can correctly determine the sender's message, even in the presence of noise. This generalizes the concept of an error correcting code ($\ECC$), which is a non-interactive $\iECC$ that is known to have erasure resilience capped at $\frac12$. The work of \cite{GuptaTZ21} constructed the first $\iECC$ resilient to $> \frac12$ adversarial erasures. However, their $\iECC$ has communication complexity quadratic in the message size. In our work, we construct the first positive rate $\iECC$ resilient to $> \frac12$ adversarial erasures. For any $\epsilon > 0$, our $\iECC$ is resilient to $\frac6{11} - \epsilon$ adversarial erasures and has size $O_\epsilon(n)$.

\end{abstract}
\thispagestyle{empty}
\newpage
\tableofcontents
\pagenumbering{roman}
\newpage
\pagenumbering{arabic}

\section{Introduction}

Consider the following task: Alice wishes to communicate a message to Bob such that even if a constant fraction of the communicated bits are adversarially tampered with, Bob is still guaranteed to be able to determine her message. This task motivated the prolific study of \emph{error correcting codes}, starting with the seminal works of~\cite{Shannon48,Hamming50}. An error correcting code encodes a message $x$ into a longer codeword $\ECC(x)$, such that the Hamming distance between any two distinct codewords is a constant fraction of the length of the codewords.

An important question in the study of error correcting codes is determining the maximal possible error resilience. It is known that in the adversarial bit-flip model, any $\ECC$ can be resilient to at most $\frac14$ corruptions, and in the adversarial erasure error model any $\ECC$ can be resilient to at most $\frac12$ corruptions. 

This prompts the following natural question: {\em Can we achieve better error resilience if we use interaction?}

In their recent work \cite{GuptaTZ21}, Gupta, Kalai and Zhang introduce the notion of an \emph{interactive error correcting code ($\iECC$)}, which is an interactive protocol with a fixed length and speaking order, such that Bob can correctly learn Alice's input $x$ as long as not too large a fraction of the \emph{total communication} is erased. 
They demonstrate that $\iECC$'s can in fact achieve a higher erasure resilience than standard error correcting codes. In particular, they design an $\iECC$ that is resilient to adversarial erasure of $\frac35 - \epsilon$ of the total communication. 

Note that a classical error correcting code is an $\iECC$ in which Alice speaks in every round. Their result essentially shows that Bob talking occasionally \emph{instead of Alice} actually improves the error resilience. It is not obvious that this should be the case -- since Bob can only send feedback, while Alice can actually send new information, Bob's messages a priori seem a lot less valuable than Alice's. Nevertheless, they are able to leverage this to improve the erasure resilience past $\frac12$.

However, the size of their protocol is quadratic in the length of Alice's original message $x$. This leaves open the question of whether there exists an $\iECC$ achieving $> \frac12$ erasure resilience with size \emph{linear} in the length of the original message. In this paper, we answer this question to the affirmative.


\subsection{Our Results}

Our main result is a positive rate $\iECC$ that achieves an erasure resilience of $\frac{6}{11} - \epsilon$ over the binary erasure channel.
\begin{theorem}
    For any $\epsilon > 0$, there exists an $\iECC$ over the binary erasure channel resilient to $\frac6{11}-\epsilon$ erasures, such that the communication complexity for inputs of size $n$ is $O_\epsilon(n)$ and the time complexity is $\poly_\epsilon(n)$.
\end{theorem}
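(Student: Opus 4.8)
The plan is to construct the $\iECC$ in two layers: an \emph{inner protocol} that achieves the $> \frac12$ erasure resilience on short messages (in the spirit of \cite{GuptaTZ21}, but redesigned to have better structure), and then an \emph{outer composition} that boosts this to linear communication. The core difficulty is that the \cite{GuptaTZ21} protocol spends quadratic communication because it uses a ``request-and-confirm'' scheme where Bob asks for one bit of $x$ at a time, each request/answer exchange being protected by a constant-rate $\ECC$, and there is no way to batch these exchanges without losing erasure resilience. So the first and hardest step is to design a constant-rate inner $\iECC$ that still beats $\frac12$: the idea is to have Alice send $\ECC(x)$ (a positive-rate binary code resilient to just under $\frac12$ erasures), and interleave this with a feedback mechanism where Bob, once he has decoded, tells Alice to ``stop and switch'' so that the remaining budget is spent re-sending only the still-ambiguous information. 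One has to carefully choose the speaking schedule (a roughly $2:1$ Alice:Bob ratio, yielding the $\frac6{11}$ constant from $\frac{6}{11} = \frac{1}{2} \cdot \frac{12}{11}$-type bookkeeping) and argue that the adversary cannot both prevent Bob from decoding \emph{and} prevent the feedback from getting through.

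Concretely, I would proceed as follows. First, fix a positive-rate binary $\ECC$ (say a Justesen-type code) resilient to $\frac12 - \delta$ erasures with efficient encoding/decoding, for a small $\delta$ depending on $\epsilon$. Second, set up the protocol skeleton: Alice repeatedly sends blocks of $\ECC(x)$; at designated checkpoints Bob sends a short (constant-size, $\ECC$-encoded) control message indicating either ``I have not decoded, keep going'' or ``I have decoded $\hat{x}$, here is a hash/fingerprint of it.'' Upon receiving a successful fingerprint, Alice either confirms (if it matches $x$) or sends a correction; upon receiving an erased control message Alice continues as before. Third, analyze the adversary's optimal strategy via a potential/budget argument: partition the total erasure budget into the part spent on Alice's transmissions and the part spent on Bob's, and show that to keep Bob ignorant the adversary must erase more than a $\frac12$-fraction of \emph{Alice's} bits, which combined with the cost of also jamming Bob's feedback exceeds $\frac{6}{11}$ of the total. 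This is the step where the exact constant is pinned down, and it is the main obstacle --- getting the schedule and the checkpoint frequency right so the two costs genuinely add rather than letting the adversary reuse budget.

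For the linear-rate boost, the plan is a standard code-concatenation-style argument adapted to the interactive setting: chop $x$ into $O(1)$-size chunks is \emph{not} enough (that just recovers the quadratic protocol), so instead run the inner $\iECC$ on a single large message but replace the ``re-send the ambiguous bits'' phase with re-sending a \emph{low-rate ECC of a short fingerprint} plus a logarithmic-depth pointer into which block disagreed, so that each feedback round resolves a constant fraction of the residual uncertainty rather than one bit. Alternatively --- and this is likely the cleaner route --- one runs the whole inner protocol once with message $x$ of length $n$, Alice sending $\ECC(x)$ of length $O(n)$, and only a constant number (depending on $\epsilon$, not $n$) of Bob-feedback checkpoints, each of size $O(n)$ as well but with the guarantee that between consecutive checkpoints the adversary must erase a fixed fraction; summing the geometric series of resilience contributions gives total length $O_\epsilon(n)$. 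I would then verify the time complexity is $\poly_\epsilon(n)$ since all encoding/decoding steps use efficient codes and there are $O_\epsilon(1)$ rounds.

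Finally, I would assemble the pieces: state the inner-protocol lemma (constant-rate $\iECC$ resilient to $\frac{6}{11} - \epsilon'$ for short inputs), state the composition lemma, and combine them, tracking how $\delta, \epsilon', $ and the checkpoint count depend on the target $\epsilon$. The delicate bookkeeping --- ensuring that the fraction lost to the ``seam'' effects at checkpoints and to the ECC rate overhead is absorbed into the $\epsilon$ slack --- is routine but must be done carefully. I expect the genuinely novel and hardest part to remain the adversary analysis establishing the $\frac{6}{11}$ bound for the inner protocol: one must enumerate the adversary's strategies (erase Alice's data to prevent decoding; erase Bob's ``I decoded'' message to keep Alice sending redundant data; erase Alice's confirmation/correction), and show via a weighting of rounds that no combination stays under $\frac{6}{11}$ of the total while keeping Bob uncertain at the end.
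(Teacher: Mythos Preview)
Your proposal has a genuine gap: the mechanism you describe cannot beat $\tfrac12$ erasure resilience, because it relies on Bob \emph{uniquely} decoding Alice's transmission before any useful feedback occurs. Over a binary alphabet, unique decoding from erasures is capped at a $\tfrac12$ fraction of Alice's bits, and repeating $\ECC(x)$ does not help (to keep a coordinate unknown the adversary simply erases it in every copy, which still costs exactly a $\tfrac12$ fraction of Alice's bits). So the adversary can spend just over $\tfrac12$ of Alice's portion---strictly less than $\tfrac12$ of the total---and Bob never produces a candidate $\hat{x}$, never sends a meaningful fingerprint, and your confirm/correct phase never starts. The ``$\tfrac12 \cdot \tfrac{12}{11}$-type bookkeeping'' you allude to does not materialize under this attack.

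The paper's protocol avoids this by having Bob \emph{list-decode} Alice's transmission to at most two candidates $\hat{x}_0,\hat{x}_1$; by Lemma~\ref{lem:1/4} this succeeds whenever fewer than $\tfrac34$ of Alice's bits are erased, which is the source of the slack above $\tfrac12$. The interaction is then used not for Alice to confirm a fingerprint, but for \emph{Bob to tell Alice a single index $i$ on which $\hat{x}_0$ and $\hat{x}_1$ differ}, after which Alice sends $x[i]$ and Bob disambiguates. Bob transmits $i$ one bit at a time using four codewords of relative distance $\tfrac23$; the $\tfrac6{11}$ arises from the $8{:}3$ Alice--Bob length ratio together with the $\tfrac34$ (list-decoding) and $\tfrac23$ (Bob's four-word code) thresholds, not from a $2{:}1$ ratio. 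Finally, the linear rate is not obtained by any inner/outer composition or a constant number of checkpoints: Alice never sends all of $x$ in a chunk. Instead she cycles through $\Theta(\log n)$-size \emph{segments} of a code $\ECC'(x)$ with the property that any two codewords differ on all but an $\epsilon$ fraction of segments (Theorem~\ref{thm:ECC}); each chunk is $O(\log n)$ bits, Bob needs $O(n/\log n)$ chunks to list-decode and then $O(\log n)$ chunks to convey $i$, giving $O(n)$ total. None of these three ingredients---list-decoding to two, Bob sending the differing index, or the segment-rotation trick---appears in your plan, and each is essential.
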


We remark that our $\iECC$ achieves a lower erasure resilience than the quadratic sized $\iECC$ of~\cite{GuptaTZ21}, which is resilient to $\frac35 - \epsilon$ erasures. However, we believe that an $\iECC$ achieving both positive rate and $\frac35 - \epsilon$ erasure resilience can likely be constructed by combining ideas from this paper and~\cite{GuptaTZ21}. Nevertheless, we leave open the existence of such an $\iECC$.


\subsection{Overview of Ideas}

In this overview, we briefly review the $\iECC$ of \cite{GuptaTZ21} then describe how to modify it to have a linear communication complexity. 

The overarching goal of the original protocol, as well as ours, is to perform the following three steps.

\begin{enumerate}
    \item Bob learns that Alice's value of $x$ is one of two possible values. (This idea is known as list decoding, which achieves better noise resilience than unique decoding.) 
    \item Bob conveys to Alice an index $i$ on which the two possible inputs differ.
    \item Alice sends the value of her input at index $i$.
\end{enumerate}

\paragraph{Summary of the Protocol of~\cite{GuptaTZ21}.} The original protocol consists of many (say $\approx \frac{n}{\epsilon}$) \emph{chunks}, where in each chunk Alice sends a message followed by Bob's reply. The protocol is designed so that each such chunk will make \emph{progress} towards Bob's unambiguously learning Alice's input, as long as the adversary did not invest more than $\frac35 - \epsilon$ erasures in that chunk. At a high level, in the first chunk with $< \frac35 - \epsilon$ erasures, Bob narrows down Alice's input to at most two options. In every future chunk with $<\frac35 - \epsilon$ erasures, either Alice gets closer to learning the index $i$ on which the two options differ, or Bob fully determines $x$ by ruling out one of the two values of $x$, e.g. by learning the value of $x[i]$ or by uniquely decoding Alice's message. Alice keeps track of a counter $\cnt$ initially set to $0$ indicating her guess for $i$. The main purpose of Bob's messages is to increment Alice's counter to $i$. 

At the beginning of the protocol, Alice sends $\ECC(x, \cnt)$ to Bob in every chunk. At the first point there are $<\frac35 - \epsilon$ erasures in a chunk, Bob will be able to list decode Alice's message to at most two options, say $(x_0, \cnt_0 = 0)$ and $(x_1, \cnt_1 = 0)$. This must happen because the relative message lengths of Alice and Bob will be such that the adversary cannot corrupt too much of Alice's message even if they corrupt none of Bob's message. Since we are in the setting of erasures, one of the two decodings must be Alice's true state, and in particular must contain Alice's true input.

At this point, Bob begins signaling to Alice to increment $\cnt$. His goal is to tell Alice to increment $\cnt$ until $\cnt = i$. He does this by only sending one of two codewords\footnote{In our protocol, Bob will send one of four codewords each message. This contributes to the lower erasure resilience of $\frac6{11} - \epsilon$.} every message that have relative distance $1$ apart. 
This way, if Bob's message is not entirely erased, Alice learns what Bob tried to send. The key is that every time $< \frac35 - \epsilon$ of a chunk is corrupted,
we can guarantee \emph{both that Bob will decode Alice's message to two possible messages, and Alice uniquely decodes Bob's message},\footnote{It is also possible that instead Bob uniquely decodes Alice's message, but then he will have uniquely learned $x$.} so that Alice and Bob make progress towards Alice learning $i$. 
Once Alice has discovered $i$, Bob signals for Alice to send the bit $x[i]$ for the rest of the protocol,\footnote{The reader familiar with the $\iECC$ of~\cite{GuptaTZ21} may recall that in the case that the two Alices Bob sees have different values of $\cnt$, Bob may instruct Alice to send a different bit for the rest of the protocol, but we do not address this for now.} which allows him to distinguish whether Alice has $x_0$ or $x_1$.

\paragraph{Modifications to Achieve Positive Rate.}

The communication complexity of the above protocol is $O(n^2)$. This comes from two parts: (1) $O(n)$ chunks are necessarily for Bob to communicate the index $i \in [n]$ to Alice via incrementation, and (2) Alice sends her length $n$ input in every chunk. We show how to lessen both requirements, thus making the final protocol linear in length.

First, for Bob to communicate $i \in [n]$ to Alice, instead of incrementing $\cnt$ by $1$ until it equals $i$, which requires $O(n)$ rounds of interaction, he builds $i$ bit by bit. That is, Bob writes $i$ out in binary, and then sends Alice each bit of this binary representation in sequence. This only requires $O(\log n)$ rounds of interaction. 

Second, we show that instead of sending $x$ every message, it suffices for Alice to encode a shorter string that is different than the corresponding short string for any other $x'$ in \emph{most} chunks. More precisely, consider an error correcting code $\ECC$ with the following property: set some $\alpha$ and for any $x \not= x'$, 
\[
    \ECC(x)[j \alpha, (j+1)\alpha - 1] \not= \ECC(x')[j\alpha, (j+1)\alpha - 1]
\]
for all but an $\epsilon$ fraction of values $j$. Then, Alice rotates through the sections, sending $\ECC(x)[j\alpha, (j+1)\alpha - 1]$ in the $\left( j~\text{mod}~\frac{|\ECC(x)|}{\alpha} \right)$'th chunk. Then, if Bob has narrowed down Alice's input to $x_0$ and $x_1$, he can simply ignore the $\epsilon$ fraction of chunks in which $\ECC(x_0)[j\alpha, (j+1)\alpha - 1] = \ECC(x_1)[j\alpha, (j+1)\alpha - 1]$. In the remainder of chunks, the segment $\ECC(x)[j\alpha, (j+1)\alpha - 1]$ is sufficient for Bob to distinguish between $x_0$ and $x_1$. If we were to let $\alpha\approx \log{n}$,\footnote{$\alpha = \Theta(\log n)$ is necessary since Alice also sends her current guess of $i$ each message, which has length $\log n$.} then our chunks are now only length $O(\log{n})$.

Combining the two modifications, we see that $\Theta(n)$ communication from Alice is necessary for Bob to narrow down Alice's input to two options, and then after that, Bob can convey $i$ to Alie in $O(\log n)$ chunks each of size $O(\log n)$. This results in an $\iECC$ with total communication $O(n + \log^2 n) = O(n)$.


We remark that our protocol has erasure resilience $\frac6{11} - \epsilon$. The limiting factor is in the construction of a protocol in which Bob builds $i$ bit by bit: our protocol requires Bob sending $4$ codewords with distance $\frac23$, which could possibly be improved to $2$ codewords with distance $\frac12$ to achieve $\frac35 - \epsilon$ erasure resilience, though we do not do that here. However, combining our second observation with the protocol from~\cite{GuptaTZ21} would be enough to give an $\iECC$ with $\frac35 - \epsilon$ erasure resilience with communication $O(n \log n)$. 

\section{Preliminaries and Definitions}
Before we dive into the technical part of our paper, we present important preliminaries on classical error correcting codes, and define an $\iECC$ formally and what it means for one to be resilient to $\alpha$-fraction of erasures.

\paragraph{Notation.} In this work, we use the following notations.
\begin{itemize}
    \item The function $\Delta(x, y)$ represents the Hamming distance between $x$ and $y$.
    \item The interval $[a,b]$ for $a,b\in \bbZ_{\ge 0}$ denotes the integers from $a$ to $b$ inclusive. The interval $[n]$ denotes the integers $1, \dots, n$.
    \item The symbol $\perp$ in a message represents the erasure symbol that a party might receive in the erasure model.
    \item When we say Bob $k$-decodes a message, we mean that he list decodes it to exactly $k$ possible messages Alice could have sent in the valid message space.
    \item The output of an $\ECC$ is $0$-indexed. All other strings are $1$-indexed.
\end{itemize}

\subsection{Classical Error Correcting Codes}

\begin{definition} [Error Correcting Code]
    An error correcting code ($\ECC$) is a family of maps $\ECC = \{ \ECC_n : \{ 0, 1 \}^n \rightarrow \{ 0, 1 \}^{m(n)} \}_{n \in \bbN}$. An $\ECC$ has \emph{relative distance $\alpha > 0$} if for all $n \in \bbN$ and any $x \not= y \in \{ 0, 1 \}^n$,
    \[
        \Delta \left( \ECC_n(x), \ECC_n(y) \right) \ge \alpha m(n).
    \]
\end{definition}

Binary error correcting codes with relative distance $\approx \frac12$ are well known to exist with linear blowup in communication complexity.

\begin{theorem}[\cite{GuruswamiS00}] \label{thm:ECC-plain}
    For all $\epsilon > 0$, there exists an explicit linear error correcting code $\ECC_\epsilon = \{ \ECC_{\epsilon, n} : \{ 0, 1 \}^n \rightarrow \{ 0, 1 \}^{m} \}_{n \in \bbN}$ with relative distance $\frac12 - \epsilon$ and with $m = m(n) = O_\epsilon (n)$. Furthermore, all codewords other than $\ECC_{\epsilon, n}(0^n)$ are relative distance $\frac12 - \epsilon$ from $0^m$ and $1^m$ as well.
\end{theorem}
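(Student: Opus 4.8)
The plan is to build $\ECC_\epsilon$ as a \emph{concatenated code}: an algebraic‑geometry code over a constant‑size alphabet on the outside and the binary Hadamard code on the inside. Fix $\epsilon>0$ and choose $q=2^{2s}$, a perfect‑square power of two, large enough (depending only on $\epsilon$) that $\tfrac{1}{\sqrt q-1}\le\epsilon$; note $q=O_\epsilon(1)$. For the outer code I would take an explicit algebraic‑geometry code $C_{\mathrm{out}}$ over $\mathbb{F}_q$ of rate $\epsilon$ and block length $N$: by the Tsfasman--Vladut--Zink bound, realized explicitly through the Garcia--Stichtenoth function field towers, such codes exist for an infinite sequence of lengths $N\to\infty$, are $\mathbb{F}_q$‑linear, encodable in time $\poly(N)$, and have relative distance at least $1-\epsilon-\tfrac{1}{\sqrt q-1}\ge 1-2\epsilon$. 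For the inner code I would take the binary Hadamard code $\Had\colon\mathbb{F}_2^{\log q}\to\mathbb{F}_2^{q}$ (identifying $\mathbb{F}_q\cong\mathbb{F}_2^{\log q}$ as $\mathbb{F}_2$‑vector spaces); it is $\mathbb{F}_2$‑linear, has length $q$, relative distance $\tfrac12$, and, crucially, \emph{every nonzero codeword has weight exactly $q/2$}. Then $\ECC_\epsilon:=C_{\mathrm{out}}\circ\Had$, of total length $m=Nq$. For message lengths $n$ not matching an admissible $(\dim,N)$ pair one pads $x$ with zeros up to the next admissible one; admissible lengths are dense enough and $q$ is constant, so $n$ and $m$ change by at most a constant factor and $\mathbb{F}_2$‑linearity is preserved. (Alternatively, the outer AG code can be replaced by the expander‑based codes of Alon--Bruck--Naor--Naor--Roth, which directly yield explicit binary codes of constant rate and relative distance $\tfrac12-\epsilon$, or by a Reed--Solomon code over a $\poly(n)$‑size field concatenated with a brute‑force inner code.)

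Verifying the claims is then routine concatenation calculus. The code is binary and $\mathbb{F}_2$‑linear as a composition of $\mathbb{F}_2$‑linear maps, so in particular $\ECC_\epsilon(0^n)=0^m$; its rate is $\epsilon\cdot\tfrac{\log q}{q}=\Omega_\epsilon(1)$, hence $m=O_\epsilon(n)$, and encoding takes time $\poly(N)+N\cdot\poly(q)=\poly_\epsilon(n)$. For the distance, two distinct codewords arise from outer codewords differing in at least $(1-2\epsilon)N$ symbols, each contributing at least $q/2$ differing bits, giving relative distance at least $(1-2\epsilon)\cdot\tfrac12=\tfrac12-\epsilon$. For the ``furthermore'' clause, take $c=\ECC_\epsilon(x)$ with $x\neq 0^n$ and split $c$ into its $N$ length‑$q$ blocks; each block is either $0^q$ or a weight‑$q/2$ Hadamard codeword, and since the outer codeword is nonzero it has between $(1-2\epsilon)N$ and $N$ nonzero symbols, so
\[
    \left(\tfrac12-\epsilon\right)m \;=\; (1-2\epsilon)N\cdot\tfrac{q}{2} \;\le\; \Delta(c,0^m) \;=\; \mathrm{wt}(c) \;\le\; N\cdot\tfrac{q}{2} \;=\; \tfrac12 m .
\]
Thus $\Delta(c,0^m)\ge(\tfrac12-\epsilon)m$ and $\Delta(c,1^m)=m-\mathrm{wt}(c)\ge\tfrac12 m\ge(\tfrac12-\epsilon)m$, and the only codeword not covered is $\ECC_\epsilon(0^n)=0^m$ itself, exactly as stated.

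The only genuinely nontrivial ingredient — and hence the main obstacle — is the existence of \emph{explicit, polynomial‑time encodable} codes over a constant‑size field whose relative distance can be pushed arbitrarily close to $1$ while keeping constant rate, i.e.\ explicit codes on the good side of the Singleton‑type tradeoff; this is precisely what the algebraic‑geometry construction (or the expander‑based alternative) supplies, and it is the reason a constant‑size field alone does not suffice with Reed--Solomon. Everything else is bookkeeping: the Hadamard inner code, being over a constant alphabet, is trivially explicit, and it is exactly its rigidity (all nonzero codewords of weight exactly $q/2$) that makes the balancedness statement fall out for free rather than requiring a separately balanced code.
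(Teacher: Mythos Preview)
Your proof is correct. The paper does not supply its own proof of this statement---it is cited from \cite{GuruswamiS00} as a black box---but your concatenated construction (large-alphabet linear outer code of relative distance $1-O(\epsilon)$ with Hadamard inner) is precisely the construction the paper itself invokes in the proof of Theorem~\ref{thm:ECC}, and the constant-weight property of nonzero Hadamard codewords is exactly the mechanism that delivers the ``furthermore'' balancedness clause for free.
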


A relative distance of $\frac12$ is in fact optimal in the sense that as the number of codewords $N$ approaches $\infty$, the maximal possible relative distance between $N$ codewords approaches $\frac12$. We remark, however, that for small values of $N$, the distance can be much larger: for $N = 2$, the relative distance between codewords can be as large as $1$, e.g. the codewords $0^M$ and $1^M$, and for $N = 4$, the relative distance can be as large as $\frac23$, e.g. the codewords $(000)^M, (110)^M, (101)^M, (011)^M$. Our constructions leverage this fact that codes with higher relative distance exist for a small constant number of codewords.

We will also need the following important lemma about the number of shared bits between any three codewords in an error correcting code scheme that has distance $\frac12$. 

\begin{lemma} \label{lem:1/4}
    For any error correcting code $\ECC_\epsilon = \{ \ECC_{\epsilon, n} : \{ 0, 1 \}^n \rightarrow \{ 0, 1 \}^{m} \}_{n \in \bbN}$ with relative distance $\frac12 - \epsilon$, and any large enough $n \in \bbN$, any three codewords in $\ECC_{\epsilon, n}$ overlap on at most $\left( \frac14 + \frac32 \epsilon \right) \cdot m$ locations.
\end{lemma}

\begin{proof}
    Consider three codewords $c_1, c_2, c_3$. Suppose that all pairs are relative distance at least $\left( \frac12 - \epsilon \right)$. Let $c_1$ and $c_2$ share $f \le \left( \frac12 + \epsilon \right) \cdot m$ bits, and all three codewords share $e \le f$ bits. Then, note that 
    \begin{align*}
        2m \cdot \left( \frac12 - \epsilon \right)
        &\le \Delta(c_1, c_3) + \Delta(c_3, c_2) \\ 
        &\le 2(f-e) + (m-f) \\
        &= m + f - 2e \\
        &\le \left( \frac32 + \epsilon \right) \cdot m - 2e,
    \end{align*}
    which means that
    \[
        e \le \left( \frac14 + \frac32 \epsilon \right) \cdot m,
    \]
    as claimed.
\end{proof}

Lemma~\ref{lem:1/4} means that assuming that $< \frac34$ of a codeword is erased, the resulting message is list-decodable to a set of size $\le 2$, at least in theory. The following theorem says such a code exists with list-decoding being polynomial time, while also satisfying a couple other properties necessary in the protocol construction in Section~\ref{sec:protocol}.

\begin{theorem} \cite{Guruswami03} \label{thm:linear}
    For all $\epsilon>0$, any explicit (given with its encoding matrix) linear code $\ECC_\epsilon = \{ \ECC_{\epsilon, n} : \{ 0, 1 \}^n \rightarrow \{ 0, 1 \}^{m} \}_{n \in \bbN}$ with relative distance $(\frac12-\epsilon)$, can be efficiently decoded and list-decoded. That is, there exists a $\poly_\epsilon(n)$-time decoding algorithm $\DEC_\epsilon = \{ \DEC_{\epsilon,n} : \{ 0, 1 \}^m \rightarrow \mathcal{P}(\{ 0, 1 \}^n) \}_{n \in \bbN}$, such that for any $n \in \bbN$, $x \in \{ 0, 1 \}^n$, and corruption $\sigma$ consisting of fewer than $(\frac12 - \epsilon) \cdot m$ erasures,
    \[
        x = \DEC_{\epsilon,n}(\sigma \circ \ECC_{\epsilon, n}(x)).
    \]
    Moreover, for any corruption $\sigma$ consisting of fewer than $(\frac34 - \frac32\epsilon) \cdot m$ erasures,
    \[
        \left| \DEC_{\epsilon,n}(\sigma \circ \ECC_{\epsilon, n}(x)) \right| \le 2, \qquad x \in \DEC_{\epsilon,n}(\sigma \circ \ECC_{\epsilon, n}(x)).
    \]
\end{theorem}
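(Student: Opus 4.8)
The plan is to observe that for a \emph{linear} code both decoding and list-decoding from erasures reduce entirely to linear algebra over $\mathbb{F}_2$, so that the whole statement follows from the distance hypothesis together with Lemma~\ref{lem:1/4}. Write $M \in \mathbb{F}_2^{m \times n}$ for the given encoding matrix, so $\ECC_{\epsilon,n}(x) = Mx$; since the code has positive relative distance, $M z = 0$ forces $\ECC_{\epsilon,n}(z) = \ECC_{\epsilon,n}(0^n)$ and hence $z = 0^n$, so $M$ has full column rank and distinct messages yield distinct codewords. Given a received word with unerased coordinate set $S \subseteq [m]$, erased set $T = [m] \setminus S$, and unerased values $r|_S$, the codewords consistent with the received word are exactly those $Mx'$ with $M_S x' = r|_S$, where $M_S$ denotes the submatrix of $M$ with rows indexed by $S$. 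The decoder $\DEC_{\epsilon,n}$ I would build simply runs Gaussian elimination on this system: in $\poly_\epsilon(n)$ time it decides feasibility, produces one solution $x_0$, and computes a basis of $K := \ker M_S$, then outputs the affine set $x_0 + K$ of consistent messages. Because the received word arises from $\ECC_{\epsilon,n}(x)$ by erasures, $x$ is itself a solution and hence always lies in the output; it remains only to bound $\dim K$ in the two regimes.

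For the unique-decoding claim, suppose $|T| < (\frac12 - \epsilon)m$. If two distinct messages $x', x''$ were both consistent, then $M(x' - x'')$ is a nonzero codeword vanishing on $S$, hence supported inside $T$, so it has Hamming weight at most $|T| < (\frac12 - \epsilon)m$, contradicting the fact that every nonzero codeword of a code of relative distance $\frac12 - \epsilon$ has weight at least $(\frac12 - \epsilon)m$. Thus $\dim K = 0$, and $\DEC_{\epsilon,n}$ returns exactly $x$.

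For the list-decoding claim, suppose $|T| < (\frac34 - \frac32\epsilon)m$, so that $|S| > m - (\frac34 - \frac32\epsilon)m = (\frac14 + \frac32\epsilon)m$. If three distinct codewords were all consistent with the received word, they would agree with $r|_S$, hence with one another, on every coordinate of $S$; that is, the three codewords would overlap on more than $(\frac14 + \frac32\epsilon)m$ locations, contradicting Lemma~\ref{lem:1/4}. Hence at most two messages are consistent, so $\dim K \le 1$: either $K = \{0^n\}$ and the output is the single message $x$, or $K = \{0^n, z\}$ for some nonzero $z$ and the output is $\{x_0, x_0 + z\}$, a set of size $2$ containing $x$. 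In either case $\DEC_{\epsilon,n}$ runs in $\poly_\epsilon(n)$ time and outputs a list of size at most $2$ containing $x$, as required.

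I do not expect any genuine obstacle: the only steps needing care are translating the stated pairwise-distance hypothesis into the minimum-nonzero-weight statement used above (immediate for linear codes), and feeding Lemma~\ref{lem:1/4} the correct count $|S| > (\frac14 + \frac32\epsilon)m$ of unerased coordinates so that no three codewords can simultaneously agree on $S$. The sharper combinatorial bounds on erasure list-decoding radius for general (possibly nonlinear) codes obtained in \cite{Guruswami03} are not needed here, since at the specific erasure fractions in the statement the list-size bound of $2$ is handed to us directly by the code's distance and by Lemma~\ref{lem:1/4}.
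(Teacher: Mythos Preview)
Your argument is correct. Note, however, that the paper does not supply its own proof of this theorem: it is stated with the citation \cite{Guruswami03} and used as a black box, with Lemma~\ref{lem:1/4} serving only as the combinatorial justification that the list size is at most~$2$ at the stated erasure fraction. Your write-up is therefore more self-contained than what the paper does: rather than appealing to the general erasure list-decoding machinery of \cite{Guruswami03}, you exploit directly that for a linear code given by its generator matrix, the set of codewords consistent with a partially erased word is an affine subspace computable by Gaussian elimination, and then you bound its size by the minimum-weight argument (for unique decoding) and by Lemma~\ref{lem:1/4} (for the list-of-two regime). This buys you an elementary $\poly(m)$-time decoder with no dependence on the internals of \cite{Guruswami03}; the paper's route buys brevity at the cost of an external reference. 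Both lead to the same conclusion, and your use of Lemma~\ref{lem:1/4} with the threshold $|S| > (\tfrac14 + \tfrac32\epsilon)m$ is exactly the intended application of that lemma.
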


Our following theorem gives an $\ECC$ such that any two codewords differ on most segments of length $\alpha$.

\begin{theorem} \label{thm:ECC}
    For all $n \in \bbN, \epsilon > 0$, there exists $\alpha = \Theta_\epsilon(\log{n})$ such that, there exists an explicit linear code $\ECC_\epsilon = \{ \ECC_{\epsilon, n} : \{ 0, 1 \}^n \rightarrow \{ 0, 1 \}^{m} \}_{n \in \bbN}$ with $m = m(n) = O_\epsilon (n)$, satisfying the following property:
    For all $n \in \bbN, \epsilon > 0$, it holds that $\alpha | m$ and for any $x \not= x' \in \{ 0, 1 \}^n$ and $j\in \{0\dots \frac{m}{\alpha}-1\}$, 
        \[
            \ECC_{\epsilon, n}(x)[ j\alpha : (j+1)\alpha - 1 ] = \ECC_{\epsilon, n}(x')[ j\alpha : (j+1)\alpha - 1 ] \label{prop:jlogn}
        \]
        for at most $\frac{\epsilon m}{\alpha}$ values of $j$.
\end{theorem}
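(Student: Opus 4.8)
The plan is to realize $\ECC_{\epsilon,n}$ as a \emph{concatenated code} whose length-$\alpha$ blocks are exactly the inner codewords: an outer code over a polynomially large alphabet $\{0,1\}^\beta$ (so that the alphabet size already exceeds the outer block length) composed with a short binary inner code of relative distance $\tfrac12-\epsilon$. The point is that over such a large alphabet the outer code can already have relative distance arbitrarily close to $1$ — e.g.\ a Reed--Solomon code, where the Singleton bound is tight — so that any two distinct outer codewords agree in only an $\epsilon$ fraction of symbol positions, and each such symbol position is precisely one length-$\alpha$ block of the final binary codeword.

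Concretely, I would proceed as follows. First, set $\beta = \lceil \log_2 n\rceil + c_\epsilon$ for a constant $c_\epsilon = \Theta(\log(1/\epsilon))$ to be chosen, and let $\ECC_{\mathrm{in}}$ be the binary linear code of \Cref{thm:ECC-plain} with relative distance $\tfrac12-\epsilon$ mapping $\{0,1\}^\beta \to \{0,1\}^{\alpha}$, where $\alpha = O_\epsilon(\beta) = \Theta_\epsilon(\log n)$ (note $\alpha \ge \beta \ge \log_2 n$, so $\alpha = \Theta_\epsilon(\log n)$). Second, take the outer code to be a Reed--Solomon code over $\mathbb{F}_{2^\beta}$ of rate $\tfrac\epsilon2$, hence relative distance $\ge 1-\tfrac\epsilon2$; split a length-$n$ input into $k = \lceil n/\beta\rceil$ symbols of $\mathbb{F}_{2^\beta}$ (zero-padding the last), so the outer encoding produces $N = \tfrac{2}{\epsilon}\lceil n/\beta\rceil$ symbols. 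Choosing $c_\epsilon$ large enough ensures $2^\beta \ge N$, so this Reed--Solomon code exists. Third, define $\ECC_{\epsilon,n}(x)$ by applying the outer encoding and then $\ECC_{\mathrm{in}}$ to each of the $N$ symbols, under a fixed $\mathbb{F}_2$-linear bijection $\mathbb{F}_{2^\beta}\cong\{0,1\}^\beta$; this outputs $m = N\alpha$ bits.

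Next I would verify the required properties. For the sizes: $N = O_\epsilon(n/\beta)$ and $\alpha = O_\epsilon(\beta)$ give $m = N\alpha = O_\epsilon(n)$, and $\alpha \mid m = N\alpha$ is immediate. For linearity and explicitness: a Reed--Solomon code is $\mathbb{F}_{2^\beta}$-linear, hence $\mathbb{F}_2$-linear once symbols are identified with bit-strings by an $\mathbb{F}_2$-linear map; $\ECC_{\mathrm{in}}$ is $\mathbb{F}_2$-linear; a composition of $\mathbb{F}_2$-linear maps is $\mathbb{F}_2$-linear, and its encoding matrix is explicitly computable from the two constituent matrices. For the block property: the $j$-th length-$\alpha$ block of $\ECC_{\epsilon,n}(x)$ is exactly $\ECC_{\mathrm{in}}$ applied to the $j$-th symbol of the outer codeword, and since $\ECC_{\mathrm{in}}$ is injective (positive distance), two inputs agree on block $j$ iff their $j$-th outer symbols agree; but distinct inputs give distinct outer codewords, which agree in at most $\tfrac\epsilon2 N \le \epsilon N = \tfrac{\epsilon m}{\alpha}$ symbol positions, as claimed. (As a bonus, combining the outer distance with the inner distance shows the binary relative distance of $\ECC_{\epsilon,n}$ is at least $(1-\tfrac\epsilon2)(\tfrac12-\epsilon) = \tfrac12 - O(\epsilon)$, which after rescaling $\epsilon$ is what makes \Cref{thm:linear} applicable to this code in the protocol.)

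I do not expect a deep obstacle here; the proof is an assembly of standard coding-theoretic facts. The points that need care are: the parameter balancing — picking $\beta = \Theta_\epsilon(\log n)$ simultaneously large enough that the Reed--Solomon code exists ($2^\beta \ge N$) and small enough that $m = O_\epsilon(n)$; the check that the concatenated map is genuinely $\mathbb{F}_2$-linear, not merely $\mathbb{F}_{2^\beta}$-linear; and the bookkeeping of padding a length-$n$ input to a whole number of $\beta$-bit symbols while keeping $\alpha \mid m$. If one only wanted the stated block property and not the binary distance, the inner code could be omitted and each $\mathbb{F}_{2^\beta}$ symbol read off directly as $\beta = \alpha$ raw bits.
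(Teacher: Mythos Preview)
Your construction is correct and is essentially the same idea as the paper's --- build a concatenated code whose outer distance is $1-\epsilon$, so that distinct codewords agree on at most an $\epsilon$ fraction of outer positions, and arrange that each length-$\alpha$ block is determined by the outer symbols it covers --- but you and the paper instantiate this in opposite ways. You take the outer alphabet to be $\{0,1\}^{\Theta(\log n)}$ (Reed--Solomon, hitting the Singleton bound), so that one outer symbol corresponds to exactly one length-$\alpha$ block; the paper instead cites the \cite{GuruswamiS00} concatenated code with a \emph{constant}-size outer alphabet and a Hadamard inner code of constant length $r$, and then \emph{groups} $\alpha' = \lfloor \tfrac{\epsilon}{2}\log n\rfloor$ consecutive outer symbols into each block, setting $\alpha = \alpha' r$. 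Your route makes the block property immediate (one block $=$ one outer symbol, injectivity of the inner code does the rest) at the price of balancing the Reed--Solomon field size against the block length; the paper's route pulls an existing linear-rate code off the shelf and pays only with a padding argument to get $\alpha' \mid m'$, but the block property then needs the extra observation that agreement on a block forces agreement on all $\alpha'$ outer symbols inside it. Either way the arithmetic is routine; your parameter check and the $\mathbb{F}_2$-linearity argument are fine.
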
 

\begin{proof}
    The paper of \cite{GuruswamiS00} and many others provide an explicit concatenated linear code where the outer code is a large alphabet ($|\Sigma|=O_\epsilon(1)$) linear code $\ECC'_{\epsilon, n} : \{ 0, 1 \}^n \rightarrow \Sigma^{m'}$ of distance $1-\epsilon$ and the inner code is a distance $\frac12$ Hadamard code $\Had : \Sigma \rightarrow \{ 0, 1 \}^r$, so that $m = m'r$. 
    
    We choose $\alpha'=\lfloor\frac{\epsilon}{2}\log{n}\rfloor$ and assume that $\alpha' | m'$ (this can be done by possibly padding the outer code with up to $\alpha'$ $0$'s; note that the code is still linear after padding with $0$'s). 
    Now, let $\alpha=\alpha'r$. Clearly, $\alpha | m$. To see why~\eqref{prop:jlogn} holds, notice that in order for some $j \in [0, \frac{m}{\alpha} - 1]$ to satisfy
    \[
        \ECC_{\epsilon, n}(x)[j\alpha:(j+1)\alpha-1] = \ECC_{\epsilon, n}(x')[j\alpha:(j+1)\alpha-1]
    \]
    it must be the case that
    \[
        \ECC'_{\epsilon, n}(x) \left[ j\alpha', (j+1)\alpha' - 1 \right]
        = \ECC'_{\epsilon, n}(x') \left[ j\alpha', (j+1)\alpha' - 1 \right].
    \]
    Let $J$ be the number of such $j \in [0, \frac{m}{\alpha} - 1]$. Then since $\ECC'_{\epsilon, n}$ has distance $\ge 1 - \epsilon$ it holds that $J \le \frac{\epsilon m}{\alpha}$, as claimed.
\end{proof}

\begin{lemma} \label{lemma:chunk-code}
    Let $\ECC'_\epsilon = \{ \ECC'_{n, \epsilon} : \{ 0, 1 \}^n \rightarrow \{ 0, 1 \}^{m(n)} \}$ be a explicit linear code satisfying the properties of Theorem~\ref{thm:ECC} with $\alpha = \alpha_n = \Theta_\epsilon(\log n)$. For all linear $\ECC_\epsilon = \{ \ECC_{n, \epsilon} : \{ 0, 1 \}^{\alpha} \times \{ 0, 1 \}^{\beta} \rightarrow \{ 0, 1 \}^{p(n)} \}$ with relative distance $\frac12-\epsilon$ and for all $\beta$, the code defined by 
    \[
        C(x) = \ECC_\epsilon(\ECC'_\epsilon(x)[0,\alpha-1], 0^\beta) || \dots || \ECC_\epsilon(\ECC'_\epsilon(x)[m-\alpha, m-1], 0^\beta)
    \]
    is a linear code with relative distance $\frac12-\frac32\epsilon$. In particular, assuming that less than $\frac34 - \frac94 \epsilon$ of $C(x)$ is erased, there is an efficient algorithm to obtain a set of size $2$ containing $x$.
\end{lemma}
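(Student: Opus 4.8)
The plan is to verify the three assertions in order: that $C$ is a linear code, that it has relative distance at least $\frac12-\frac32\epsilon$, and that it can be efficiently list-decoded to a set of size $2$ from fewer than a $\frac34-\frac94\epsilon$ fraction of erasures --- where the last will follow by applying \Cref{thm:linear} directly to $C$. For linearity, I would observe that $C$ is a composition of linear maps: $x\mapsto\ECC'_\epsilon(x)$ is linear by \Cref{thm:ECC}; extracting the $j$-th length-$\alpha$ block of $\ECC'_\epsilon(x)$ is a coordinate projection; appending $0^\beta$ is the linear inclusion $\{0,1\}^\alpha\hookrightarrow\{0,1\}^\alpha\times\{0,1\}^\beta$; $\ECC_\epsilon$ is linear by hypothesis; and coordinatewise concatenation of linear maps is linear. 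Composing these shows $C$ is linear, and its generator matrix is computable from those of $\ECC'_\epsilon$ and $\ECC_\epsilon$, so $C$ is explicit as long as $\ECC_\epsilon$ is (which holds in our application).

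For the distance bound, I would fix $x\ne x'\in\{0,1\}^n$, set $B_j(y):=\ECC'_\epsilon(y)[j\alpha:(j+1)\alpha-1]$, and let $G=\{j:B_j(x)=B_j(x')\}$. By \Cref{thm:ECC}, $|G|\le\epsilon m/\alpha$. For every index $j\notin G$ the inputs $(B_j(x),0^\beta)$ and $(B_j(x'),0^\beta)$ to $\ECC_\epsilon$ are distinct, so the $j$-th blocks of $C(x)$ and $C(x')$ differ on at least $(\frac12-\epsilon)p(n)$ coordinates; summing over the at least $(1-\epsilon)\frac{m}{\alpha}$ such blocks,
\[
    \Delta(C(x),C(x')) \ge (1-\epsilon)\cdot\frac{m}{\alpha}\cdot\left(\frac12-\epsilon\right)p(n) \ge \left(\frac12-\frac32\epsilon\right)\cdot\frac{m}{\alpha}\cdot p(n),
\]
and since $|C(x)|=\frac{m}{\alpha}p(n)$ this gives relative distance at least $\frac12-\frac32\epsilon$ (in fact $\frac12-\frac32\epsilon+\epsilon^2$).

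Finally, $C$ is an explicit linear code of message length $n$, block length $O_\epsilon(n)$, and relative distance at least $\frac12-\frac32\epsilon$, so I would invoke \Cref{thm:linear} with its parameter set to $\frac32\epsilon$: this produces a $\poly_\epsilon(n)$-time algorithm that, on any word obtained from $C(x)$ by erasing fewer than $\frac34-\frac32\cdot\frac32\epsilon=\frac34-\frac94\epsilon$ of the coordinates, outputs a set of size at most $2$ containing $x$, which is exactly what is claimed. I do not anticipate a real obstacle here: the content is entirely in recognizing that $C$ inherits linearity for free and inherits good distance from the ``most blocks differ'' property of \Cref{thm:ECC}, after which \Cref{thm:linear} does the rest. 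The one thing to watch is the constant bookkeeping --- the $(1-\epsilon)$ fraction of useful blocks is exactly what degrades the distance from $\frac12-\epsilon$ to $\frac12-\frac32\epsilon$, and that degradation is precisely what turns the generic ``$\frac34-\frac32\cdot(\text{distance deficit})$'' erasure threshold of \Cref{thm:linear} into $\frac34-\frac94\epsilon$ --- together with the minor rounding concealed in ``$\alpha\mid m$'' and in $|G|\le\epsilon m/\alpha$, and the harmless requirement that $\ECC_\epsilon$ be explicit so that $C$ is too.
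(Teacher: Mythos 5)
Your proof is correct and follows the same approach as the paper: linearity by composition, the distance bound via the $(1-\epsilon)$ fraction of differing blocks each contributing relative distance $\frac12-\epsilon$, and the decoding claim by applying Theorem~\ref{thm:linear} with parameter $\frac32\epsilon$. Your writeup is somewhat more explicit than the paper's (which leaves the invocation of Theorem~\ref{thm:linear} implicit and gives the linearity argument in one line), but the substance is identical.
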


\begin{proof}
    Regardless of the choice of $\ECC$, since both $\ECC$ and $\ECC'$ are linear, it follows that $C$ is linear as well.
    
    We now show that the relative distance between $C(x)$ and $C(x')$ is at least $\frac12-\frac32\epsilon$. By Lemma~\ref{thm:ECC}, at most $\epsilon$ chunks $\ECC'[j\alpha,(j+1)\alpha-1]$ are identical for $x$ and $x'$. For the remaining $1-\epsilon$ fraction of the chunks, the relative distance is at least $\frac12-\epsilon$, so the total relative distance is at least $$\left(\frac12-\epsilon\right)(1-\epsilon)>\frac12-\frac32\epsilon.$$
    
\end{proof}

\subsection{Interactive Error Correcting Codes} \label{iecc-def}
We formally define our notion of an \emph{interactive error correcting code ($\iECC$)}. The two types of corruptions we will be interested in are erasures and bit flips. We first start by defining a non-adaptive interactive protocol.

\begin{definition} [Non-Adaptive Interactive Protocol]
    A non-adaptive interactive protocol $\pi = \{ \pi_n \}_{n \in \bbN}$ is an interactive protocol between Alice and Bob, where in each round a single party sends a single bit to the other party. The order of speaking, as well as the number of rounds in the protocol, is fixed beforehand. The number of rounds is denoted $|\pi|$. 
\end{definition}

\begin{definition} [Interactive Error Correcting Code]
    An interactive error correcting code ($\iECC$) is a non-adaptive interactive protocol $\pi = \{ \pi_n \}_{n \in \bbN}$, with the following syntax: 
    \begin{itemize}
        \item At the beginning of the protocol, Alice receives as private input some $x \in \{ 0, 1 \}^n$.
        \item At the end of the protocol, Bob outputs some $\hat{x} \in \{ 0, 1 \}^n$.
    \end{itemize}
    We say that $\pi$ is $\alpha$-resilient to adversarial bit flips (resp. erasures) if there exists $n_0 \in \bbN$ such that for all $n > n_0$ and $x \in \{ 0, 1 \}^n$, and for all online adversarial attacks consisting of flipping (resp. erasing) at most $\alpha \cdot |\pi|$ of the total communication, Bob outputs $x$ at the end of the protocol with probability $1$.
\end{definition}
\section{$6/11$ Protocol} \label{sec:protocol}

\subsection{Overview}

Let $\ECC' : \{ 0, 1 \}^n \rightarrow \{ 0, 1 \}^m$ be an error correcting code satisfying the statement of Theorem~\ref{thm:ECC} with $\alpha = \Theta(\log n)$, and let $\ECC : \{ 0, 1 \}^\alpha \times \{ 0, 1 \}^{\le \log n} \rightarrow \{ 0, 1 \}^p$ be an error correcting code with distance $\frac12$ that is also relative distance $\frac12$ from $0^p, 1^p$. 

Our $\iECC$ consists of $O_\epsilon \left( \frac{m}{\alpha} \right)$ chunks, each consisting of Alice sending a $p$-bit message followed by Bob sending a $\frac{3p}8$-bit message. Bob's messages are always one of four words $\bar{0}, \bar{1}, \bar{2}, \bar{3} \in \{ 0, 1 \}^{3p/8}$ with relative distance $\frac23$. We outline our protocol below. In what follows, we assume that all messages Bob receives are consistent with the same two values of $x$, otherwise Bob can rule out one of the values of $x$ and determine Alice's true input.

\begin{enumerate}
    \item Alice initially holds a string $\ind \in \{ 0, 1 \}^{\leq\log n}$ initially set to the empty string $\ind = \emptyset$. Alice begins the protocol by sending $\ECC(\ECC'(x)[j\alpha, (j+1)\alpha-1], \ind)$ to Bob in every chunk.
    
    \item Bob begins the protocol sending $\bar{0}$ every message. Every $\frac{m}{\alpha}$ chunks, he attempts to list-decode Alice's previous $\frac{m}{\alpha}$ messages to find consistent values of $x$. Note that by Lemma~\ref{lemma:chunk-code}, if there are at most $\frac34 - \frac32 \epsilon$ erasures in Alice's message in those $\frac{m}{\alpha}$ chunks, then Bob is guaranteed to find at most two possible values of $x$.
    
    \item When Bob has found two consistent values of $x$, say $\hat{x}_0$ and $\hat{x}_1$, he determines an index $i \in [n] = \{ 0, 1 \}^{\log n}$ such that $\hat{x}_0[i] \not= \hat{x}_1[i]$. His goal is now to communicate $i$ to Alice, bit by bit. He does this by sending either $\bar{0}$, $\bar{1}$, or $\bar{2}$ every chunk. 
    
    To communicate the $\nxt$'th bit of $i$, Bob adds $i[\nxt] + 1$ to $\mes$ modulo $3$, where $\overline{\mes}$ was the last message he sent, to get his new message $\mes'$, and begins sending $\overline{\mes'}$ every chunk. (When Alice receives a message from Bob that is different from the last message she received, she can calculate the difference in the two messages to determine the bit.) He does this until he list-decodes Alice's message to two possibilities $\ECC(\ECC'(\hat{x}_0)[j\alpha, (j+1)\alpha-1, \ind_0)$ and $\ECC(\ECC'(\hat{x}_1)[j\alpha, (j+1)\alpha-1], \ind_1)$ such that $\ECC'(\hat{x}_0)[j\alpha, (j+1)\alpha-1] \not= \ECC'(\hat{x}_1)[j\alpha, (j+1)\alpha-1]$, where at least one of $\ind_0$, $\ind_1$ has length $\nxt$. If both have length $\nxt$, he proceeds to communicate the $(\nxt+1)$'th bit of $i$ in the same way. If only one of the two Alice's has $|\ind_b| = \nxt$, Bob switches to sending $\bar{3}$ for the rest of the protocol, signaling to Alice to send him the parity of $|\ind|$ so that he can distinguish between whether Alice has $(\hat{x}_0, \ind_0)$ or $(\hat{x}_1, \ind_1)$.
    
    \item Whenever Alice unambiguously sees a \emph{change} in Bob's message from a $\bar{0}$ to a $\bar{1}$ or $\bar{2}$ (or cyclic), she calculates $b = \mes' - \mes - 1~\text{mod}~3$ and appends $b$ to $\ind$. If she ever receives a $\bar{3}$, she switches to sending $(|\ind|~\text{mod}~2)^p$ for the rest of the protocol. Otherwise, at some point she has $|\ind| = \log n$, so she can convert $\ind$ into an index $i \in [n]$ and send $(x[i])^p$ for the rest of the protocol. Note that Bob can distinguish between $\hat{x}_0$ and $\hat{x}_1$ using the value of $x$ at the index $i$.
\end{enumerate}

In the above outline, one has to be careful around $|\ind| = \log n-1$. In particular, if one Alice has $|\ind| = \log n - 1$ and the other has $|\ind| = \log n$, the second will be sending $x[i]$ for the rest of the protocol and it is thus incorrect for Bob to send $\bar{3}$ to signal the first Alice to send the parity of the length of $\ind$. Instead, once Bob has list-decoded Alice's message such that $|\ind_0| = |\ind_1| = \log n - 1$, Bob commits to sending the next message $\in \{ \bar{0}, \bar{1}, \bar{2} \}$ that conveys to Alice the final bit of $i$ for the rest of the protocol. 

\subsection{Protocol} \label{sec:6/11-protocol}

\protocol{Interactive Binary One Way Protocol Resilient to $\frac6{11} - O(\epsilon)$ Erasures}{6/11}{
    Let $n$ be the size of the message $x \in \{ 0, 1 \}^n$ that Alice wishes to convey to Bob. Let
    \[ 
        \ECC' : \{ 0, 1 \}^n \rightarrow \{ 0, 1 \}^m
    \] 
    be an error correcting code satisfying the statement of Theorem~\ref{thm:ECC} with $\alpha = O_\epsilon(\log n)$, and let 
    \[ 
        \ECC(\cdot, \cdot) : \{ 0, 1 \}^\alpha \times \{ 0, 1 \}^{\le \log n}  \rightarrow \{ 0, 1 \}^p
    \]
    be a code with relative distance $\frac12 - \epsilon$ that is also relative distance $\frac12 - \epsilon$ from $0^p$ and $1^p$, such that $m = O_\epsilon(n)$ and $p = O_\epsilon(\log n)$.
    
    In what follows, $\bar{0}, \bar{1}, \bar{2}, \bar{3}$ denote length $\frac{3p}{8}$ binary strings that have relative distance $\frac23$ from each other (specifically, $(000)^{p/8},(011)^{p/8},(101)^{p/8},(110)^{p/8}$). Our protocol consists of $T = \frac{m}{\epsilon \alpha}$ chunks of Alice sending a $p$ bit message followed by Bob sending a $\frac{3p}{8}$ bit message. Throughout the protocol, the parties will choose a response based on which case applies to their received message $M$; if multiple apply, they choose the first case on the list.
    
    \begin{center}
    {\bf \fbox{Alice}}
    \end{center}
    
    In addition to $x$, Alice has an internal state consisting of
    \begin{itemize}
        \item A string $\ind \in \{ 0, 1 \}^{\le n}$ that at the beginning of the protocol is set to the empty string $\emptyset$.
        \item An internal state $\mes \in \{ 0, 1, 2, 3 \}$, originally set to $0$, representing Bob's most recent message that successfully got through to Alice.
    \end{itemize}
    
    Number the chunks from $1$ to $T$. In each chunk, let $j$ be the residue modulo $\frac{m}{\alpha}$ of the chunk's number minus $1$. Every chunk until otherwise instructed, Alice always sends $\ECC(\ECC'(x)[j\alpha, (j+1)\alpha-1], \ind)$ (recall that the output of $\ECC'$ is $0$-indexed).
    
    Each time after receiving a message $M \in \{ 0, 1, \perp \}^{3p/8}$ from Bob, she updates her internal state, or switches to sending a single bit for the rest of the protocol, as follows. Note that if $< \frac23$ of the symbols of $M$ are erasures, there is only one value of $\gamma \in \{ 0, 1, 2, 3 \}$ such that Bob could've sent $\bar{\gamma}$.
    
    \begin{caseof}
        \case {$\geq \frac23$ of the symbols in $M$ are $\bot$.}{
            Alice makes no update to $\ind, \mes$.
        }
        
        \case{$M$ uniquely decodes to $\gamma \in \{ 0, 1, 2 \}$.}{
            If $\gamma = \mes$, Alice makes no update to $\ind, \mes$.
            
            Otherwise, let $b = (\gamma - \mes - 1 ~\text{mod}~ 3) \in \{ 0, 1 \}$. Alice sets $\ind \gets \ind || b$ and $\mes \gets \gamma$.
            
            If $|\ind| = \log n$, Alice interprets $\ind$ as an index $i \in [0, n-1]$ and sends $x[i]^{p}$ every message for the rest of the protocol.
        }
        
        \case{$M$ uniquely decodes to $\gamma = 3$.}{
            Alice sends $(|\ind|~\text{mod}~2)^p$ for the rest of the protocol.
        }
    
    \end{caseof}
    
    \begin{center}
    {\bf \fbox{Bob}}
    \end{center}
    
    Bob holds a variable $\hat{x}$, initially set to $\emptyset$, that will be updated with his final output either at the end of the protocol or once he has unambiguously learns Alice's value of $x$. Once $\hat{x}$ is set to a value ($\neq \emptyset$), it will not be updated again. That is, Bob ignores any future instructions to update it. At the end of the protocol, Bob outputs $\hat{x}$. If at any point in the protocol $\hat{x}$ has already been set, Bob may send Alice any arbitrary message, say $\bar{1}$. 
    
    Bob also keeps track of the following values:
    \begin{itemize}
        \item Two values $\hat{x}_0$ and $\hat{x}_1$, to be set when Bob determines two possible values of Alice's input for the first time. 
        \item A fixed index $i$ where $\hat{x}_0$ and $\hat{x}_1$ differ, set as soon as $\hat{x}_0$ and $\hat{x}_1$ are known. Bob stores $i$ as a $\log n$-bit binary string.
        \item A set $\BAD$ containing all $j \in [0, \frac{m}{\alpha}-1]$ for which $\ECC'(\hat{x}_0)[j\alpha, (j+1)\alpha-1] = \ECC'(\hat{x}_1)[j\alpha, (j+1)\alpha-1]$, set as soon as $\hat{x}_0, \hat{x}_1$ are determined. 
        \item $\mes \in \{ 0, 1, 2 \}$, representing the last message $\overline{\mes}$ he sent Alice; $\mes$ is originally set to $0$.
        \item $\nxt \in [1, \log n]$, originally set to $1$, denoting the next bit of $i$ that Bob is trying to send to Alice.
        \item $\fin \in \{ 0, 1, 2, 3 \}$ which will Bob will set when he transitions to Phase 2 to denote the message he sends for the rest of the protocol.
        \item $\parity \in \{ 0, 1 \}$ which Bob will set at the end of Phase 1 to use in Phase 2 to distinguish between $\hat{x}_0$ and $\hat{x}_1$.
    \end{itemize}

    Each chunk, Bob's outgoing message is one of four codewords: $\bar{0}$, $\bar{1}$, $\bar{2}$, or $\bar{3}$. He begins the protocol in Phase 0. Once he has learned two possible values of Alice's $x$ he moves onto Phase 1. At some point he transitions to Phase 2, where he remains for the rest of the protocol.

    \begin{enumerate}[align=left, leftmargin=*, label={\bf Phase \arabic*:}]
    \setcounter{enumi}{-1}
    
    \item 
        In Phase 0, Bob's message to Alice is always $\bar{0}$. At the end of the $\frac{km}{\alpha}$'th chunk, where $k \in \bbN$, if fewer than $\frac34 - \frac94\epsilon$ of Alice's last $\frac{m}{\alpha}$ messages have been erased, Bob determines up to two values of $x$ that are consistent with the previous $\frac{m}{\alpha}$ messages. He does this by list decoding the code $C$ as given in Lemma~\ref{lemma:chunk-code}, which is possible by Theorem~\ref{thm:linear}. If there is only one value of $x$, he sets $\hat{x}$ to this unique value. Otherwise, he sets $\hat{x}_0$ and $\hat{x}_1$ to the two values of $x$ and lets $i \in \{ 0, 1 \}^{\log n}$ be an index for which $\hat{x}_0[i] \not= \hat{x}_1[i]$. He then transitions to Phase 1.
        
    \item 
        In Phase 1, Bob's message to Alice is always one of $\bar{0}, \bar{1}, \bar{2}$.
        
        He begins by setting $\mes \gets 1 + i[0]$ and sending $\overline{\mes}$ to Alice. Every message thereafter, let $M \in \{ 0, 1 \}^p$ denote the most recent message he received from Alice and let $j$ denote the number of the chunk minus 1, modulo $\frac{m}{\alpha}$. He determines his behavior depending on which of the following cases $M$ fals under. 
        
        \begin{caseof}
        
        \case {$j \in \BAD$ or $\geq \frac34-\frac32\epsilon$ of the symbols in $m$ are $\bot$.}{
            Bob sends $\overline{\mes}$.
        }
        
        \case {$M$ is $\leq 2$-decoded where there is at most one $b \in \{ 0, 1 \}$ such that one of the decoded elements is of the form $\ECC(\ECC'(\hat{x}_b)[j\alpha, (j+1)\alpha-1], \ind_b)$.}{
            Bob sets $\hat{x} \gets \hat{x}_b$.
        }
        
        \end{caseof}
        
        In all remaining cases, Bob decodes $M$ to two states $\{ \ECC(\ECC'(\hat{x}_0)[j\alpha, (j+1)\alpha-1], \ind_0), \ECC(\ECC'(\hat{x}_1)[j\alpha, (j+1)\alpha-1], \ind_1) \}$ such that $\ECC'(\hat{x}_0)[j\alpha, (j+1)\alpha-1] \not= \ECC'(\hat{x}_1)[j\alpha, (j+1)\alpha-1]$.
        
        \begin{caseof}
        \setcounter{casenum}{3}
        
        \case{For some $b \in \{ 0, 1 \}$, either $|\ind_b| \not\in \{ \nxt-1, \nxt \}$ or $\ind_b$ is not a prefix of $i$.}{
            Bob sets $\hat{x} \gets \hat{x}_{1-b}$.
        }
        
        \case{$|\ind_0| \not= |\ind_1|$.}{
            Bob transitions to Phase 2 with  $\fin \gets 3$ and $\parity \gets (|\ind_1|~\text{mod}~2)$.
        }
        
        \case{$|\ind_0| = |\ind_1| = \nxt-1$.}{
            Bob sends $\overline{\mes}$.
        }
        
        \case{$|\ind_0| = |\ind_1| = \nxt$.}{
            Bob sets $\nxt \gets \nxt+1$ and $\mes \gets (\mes + 1 + i[\nxt] ~\text{mod}~3)$. If $\nxt = \log n$, Bob transitions to Phase 2 with $\fin \gets \mes$ and $\parity \gets \hat{x}_1[i]$. Otherwise, if $\nxt < \log n$, he stays in Phase 1 and sends $\overline{\mes}$.
        }
        
        \end{caseof}
        
    \item 
        When Bob enters Phase 2, he has set values of $\fin \in \{ 0, 1, 2, 3 \}$ and $\parity \in \{ 0, 1 \}$. He sends $\overline{\fin}$ for the rest of the protocol.
        
        At the end of the protocol, let $\beta \in \{ 0, 1 \}$ be the most last bit that Bob received. He sets $\hat{x} \gets \hat{x}_1$ if $\beta = \parity$ and $\hat{x} \gets \hat{x}_0$ otherwise.
    \end{enumerate}
}
\subsection{Analysis}

\begin{claim}
    While Bob has not yet set $\hat{x}$, Alice's value of $\ind$ is a prefix of $i \in \{ 0, 1 \}^{\log n}$. 
\end{claim}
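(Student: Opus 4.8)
The plan is to prove the claim by induction on the chunk number, strengthening it so that it also controls Alice's auxiliary register $\mes$. List the bits of $i$ in the order Bob transmits them as $b_1,\dots,b_{\log n}$, and set $V_0 := 0$ and $V_k := V_{k-1}+1+b_k \pmod 3$; note that consecutive $V_k$'s differ (by $1$ or $2 \bmod 3$), and that $V_k$ is precisely the value of Bob's $\mes$ while his counter is $\nxt = k$. The strengthened invariant I would carry, to hold at the start of every chunk in which Bob has not yet fixed $\hat{x}$, reads: (a) Alice's $\ind$ is the length-$|\ind|$ prefix of $i$; (b) $|\ind|\in\{\nxt-1,\nxt\}$; and (c) Alice's $\mes$ equals $V_{|\ind|}$. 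During Phase~0 one reads this as $\ind=\emptyset$, $\mes = 0 = V_0$: Bob sends only $\bar0$, and since the channel only erases, Alice never receives a word that uniquely decodes to $\bar1,\bar2$, or $\bar3$, so she never touches $\ind$ or $\mes$; hence when Bob first fixes $i$ and enters Phase~1 (with $\nxt = 1$) the invariant holds, giving the base case.

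For the inductive step, within a chunk Bob reads Alice's message and updates, then Alice reads Bob's and updates. On Bob's side, in every branch that does not fix $\hat{x}$ he either keeps $\mes_{\mathrm{Bob}} = V_\nxt$ or executes the ``both decoded candidates have $|\ind|=\nxt$'' branch; that word has $<\frac34-\frac32\epsilon$ erasures, so its list-decoding contains Alice's true state, which certifies the true $|\ind|=\nxt$ and makes Bob advance to $V_{\nxt+1}$ and increment $\nxt$ (or pass to Phase~2 when $\nxt$ reaches $\log n$). Either way the message Bob then transmits is $V_{\nxt'}$ for the updated $\nxt'$. On Alice's side, she modifies $\ind$ only upon receiving an unerased $\gamma\ne\mes$; since the channel only erases, $\gamma$ is exactly the value Bob just transmitted, and invariants (b)--(c) force $\gamma = V_{|\ind|+1}$ — one level above her recorded $\mes=V_{|\ind|}$ — using the fact that Bob only advances after certifying $|\ind|=\nxt$, so Alice cannot be two levels behind. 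Hence the bit she appends is $\gamma-\mes-1 = b_{|\ind|+1}\pmod 3$, the genuine next bit of $i$, and her updated $\mes=V_{|\ind|+1}$ re-establishes (a)--(c); a short check confirms (b) survives whether or not Bob advanced $\nxt$ this chunk, and that if the append brings $|\ind|$ to $\log n$ then Alice switches permanently to sending $x[i]^p$, consistently with the claim. The Phase~2 transitions are immediate: with $\fin = 3$ Bob henceforth sends $\bar3$ and Alice freezes $\ind$; with $\fin\in\{\bar0,\bar1,\bar2\}$ (the $\nxt=\log n$ branch) Bob henceforth sends a fixed codeword, so Alice observes at most one more change, appending exactly $b_{\log n}$ and freezing $\ind = i$.

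The step I expect to be the main obstacle is maintaining invariant (b): if Alice ever fell two or more levels behind Bob she would observe a composite jump $V_k \to V_{k+2}$ and append the meaningless bit $V_{k+2}-V_k-1 \ne b_{k+1}$. Ruling this out hinges on Bob advancing $\nxt$ only after his list-decoding \emph{certifies} that the true Alice has recorded the current bit, which in turn needs both that list-decoding always returns the true state and that the ``some decoded candidate has a bad or non-prefix $\ind$'' case is tested before the ``advance'' case, so that a mismatched fake candidate makes Bob fix $\hat{x}$ (after which the claim is vacuous) rather than spuriously advance; it also uses that under the standing assumption the true Alice's message always has the expected form $\ECC(\ECC'(\hat{x}_{b})[j\alpha,(j+1)\alpha-1],\ind)$, so her $\ind$ is literally one of the two decoded strings. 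The only remaining fussiness is the $\nxt = \log n-1 \to \log n$ boundary, which the protocol handles by having Bob commit on entering Phase~2 to the single message carrying the last bit of $i$; I would treat this as a special sub-case of the inductive step exactly as above.
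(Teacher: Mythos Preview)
Your proof is correct and follows essentially the same approach as the paper: both hinge on the observation that Bob only advances his message from $V_k$ to $V_{k+1}$ after a list-decoding in which every candidate (hence the true Alice) has $|\ind| = \nxt$, so Alice never falls more than one level behind and therefore sees every transition in order. Your presentation packages this as an explicit chunk-by-chunk induction with the strengthened invariant $(\ind \text{ prefix of } i,\ |\ind|\in\{\nxt-1,\nxt\},\ \mes_{\text{Alice}} = V_{|\ind|})$, whereas the paper argues it more globally via the three facts that Bob's distinct-message sequence encodes $i$, Alice's dedup'd received sequence equals Bob's sent sequence, and Alice's updates read off the differences; these are the same argument in different clothing.
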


\begin{proof}
    This is clearly true at the beginning of the protocol when $\ind = \emptyset$. Alice only changes $\ind$ when she receives a message $\{ \bar{\gamma} \not= \overline{\mes}$ with $\gamma \in \{ 0, 1, 2 \}$ from Bob such that $\gamma - \mes - 1 \equiv 0,1 \Mod{3}$. Note that this means that she updates $\ind$ only the first time she unambiguously receives a new valued message from Bob. Since while Bob is in Phase 0 he only sends $\bar{0}$, Alice does not update $\ind = \emptyset$ during this time. It remains to show that $\ind$ remains a prefix of $i$ when Bob is in Phase 1 or 2.
    
    This follows from the following three facts: (1) If we consider the sequence $0 = \mes_0 \not= \mes_1 \not= \dots \not= \mes_k \in \{ 0, 1, 2 \}$ (excluding the final message $\bar{3}$ that Bob may send in Phase 2) of different messages that Bob sends throughout the protocol, then $(\mes_{\iota+1} - \mes_\iota - 1 ~\text{mod}~3) = \ind_\iota$. (2) If we consider the sequence of values $\gamma_1 \not= \gamma_2 \not = \dots \not= \gamma_\ell \in \{ 0, 1, 2 \}$ that Alice unambiguously decodes Bob's messages to while Bob is in Phase 1 and 2, discarding contiguous repeats and any final $\bar{3}$, then $\gamma_\iota = \mes_\iota$. This is true because Bob sends the same value of $\mes$ until he 2-decodes Alice's message and sees that $|\ind_0| = |\ind_1| = \nxt$, and since Alice's real value of $\ind$ is either $\ind_0$ or $\ind_1$, both of which have length $\nxt$, Alice must've unambiguously decoded one of his messages to $\mes$. (3) Alice updates $\ind$ $\ell$ times, each time appending $(\gamma_{\iota+1} - \gamma_\iota - 1) ~\text{mod}~3$ to $\ind$, for $\iota = 0, \dots, \ell - 1$, where $\gamma_0 = 0$. 
\end{proof}

\begin{theorem}
    Protocol~\ref{prot:6/11} is resilient to a $\frac6{11}-O(\epsilon)$ fraction of erasures. For an input of size $n$, the total communication is $O_\epsilon(n)$. Alice and Bob run in $\poly_\epsilon(n)$ time.
\end{theorem}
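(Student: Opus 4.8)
The plan is to establish the two complexity bounds, which are immediate, and then the erasure resilience, which I would decompose into a \textbf{soundness} statement (Bob never commits to a wrong value) and a \textbf{liveness} statement (Bob does commit to a value, given the erasure budget).

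First I would dispatch the complexity bounds. The protocol runs for $T = \frac{m}{\epsilon\alpha}$ chunks, each carrying a $p$-bit message from Alice and a $\frac{3p}{8}$-bit message from Bob, so $|\pi| = \frac{11p}{8}\cdot\frac{m}{\epsilon\alpha}$; since $\alpha = \Theta_\epsilon(\log n)$, $p = O_\epsilon(\log n)$ and $m = O_\epsilon(n)$, the factor $p/\alpha$ is $O_\epsilon(1)$ and hence $|\pi| = O_\epsilon(n)$. For the running time, in each chunk the parties compute a constant number of $\ECC$- and $\ECC'$-encodings and, once per block of $\frac m\alpha$ chunks, one list-decoding of the code $C$ from Lemma~\ref{lemma:chunk-code}; all of these are $\poly_\epsilon(n)$-time by Theorems~\ref{thm:ECC} and~\ref{thm:linear}.

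For soundness I would use the already-established invariant that $\ind$ is a prefix of $i$ whenever Bob has not yet committed, together with the fact that on an erasure channel Alice's true state is always consistent with what Bob receives, so every list-decoding Bob performs contains that true state. With this in hand I would check each way Bob sets $\hat x$: in Phase 0, a unique decoding of the block must equal $x$ and a size-two decoding must contain $x$, so $\hat x_0,\hat x_1$ are set correctly; in Phase 1, Case 2 picks out the unique branch $b$ whose decoded element has the form $\ECC(\ECC'(\hat x_b)[\cdot],\ind_b)$, which must be the true one, and Case 4 discards the branch whose $\ind_b$ fails to be a length-$\{\nxt-1,\nxt\}$ prefix of $i$, which cannot be Alice's; and in Phase 2 the bit $\beta$ Bob finally reads is $x[i]$ (resp.\ the parity of $|\ind|$), which separates $\hat x_1$ from $\hat x_0$ by the choice of $\parity$. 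The one subtlety --- that Alice must actually have switched to transmitting that single bit before Bob reads $\beta$ --- I would fold into liveness, and it is precisely the reason for the special treatment of $|\ind| = \log n - 1$ noted after the protocol.

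The substance is the liveness argument, which I would run as a potential-versus-budget argument. The claim to prove is that to keep $\hat x$ unset for all $T$ chunks, the adversary is forced to erase, averaged over the non-$\BAD$ chunks (all but a $\le\epsilon$-fraction), at least a $\frac34 - O(\epsilon)$ fraction of Alice's $p$-bit message per chunk. In Phase 0 this is the threshold $\frac34 - \frac94\epsilon$ per block below which Bob two-decodes $C$ (and $\frac12 - \frac32\epsilon$ in the one block where his decode succeeds, below which it is unique and Bob wins outright). In Phase 1, in any non-$\BAD$ chunk where fewer than $\frac34 - \frac32\epsilon$ of Alice's message is erased, Bob is forced either to commit (Cases 2, 4) or to make genuine progress toward $i$ (advancing $\nxt$ in Case 7, or moving to Phase 2 in Case 5); since $i$ has only $\log n$ bits while each block furnishes $\Omega_\epsilon(n/\log n) \gg \log n$ usable chunks, the adversary cannot stall Phase 1 for a whole block without paying this toll. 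And Phase 2 forces the adversary to keep erasing both Bob's signal (so Alice never switches) and Alice's messages (so Bob cannot read the distinguishing bit). Summing the toll over all $T$ chunks yields erasures of at least $(\frac34 - O(\epsilon))p$ per chunk; since Alice's messages are a $\frac{p}{p+3p/8} = \frac{8}{11}$ fraction of each $\frac{11p}{8}$-bit chunk, this is at least $(\frac34 - O(\epsilon))\cdot\frac{8}{11} = \frac6{11} - O(\epsilon)$ of the total communication, contradicting the hypothesis --- hence Bob commits, and by soundness to the correct value.

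The hard part will be the bookkeeping at the phase transitions: one has to rule out that the adversary gains by pushing Bob cheaply out of Phase 0 into Phase 1, or into Phase 2 before Alice is ready, thereby ``double-counting'' a chunk it has only partly paid for. I would handle this with a single potential --- morally, the number of Phase-0 blocks survived on the coarse scale, glued to $2\nxt + \mathbf 1[\text{Alice is caught up}]$ on the fine Phase-1 scale, maxed out in Phase 2 --- and show it is monotone, that it fails to advance in a chunk only when the adversary pays the per-chunk toll there, and that its maximum value forces a (correct) commitment; the delicate points are matching the block-granularity accounting of Phase 0 to the chunk-granularity accounting of Phase 1 and disposing cleanly of the $|\ind| = \log n - 1$ boundary case.
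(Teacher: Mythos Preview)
Your overall plan is sound and your potential-function framing is a legitimate alternative to the paper's approach, but there is a real gap in your Phase-1 accounting. You claim that in any non-$\BAD$ chunk with fewer than $\frac34$ erasures on Alice's message, Bob either commits or ``makes genuine progress'' (Cases 2, 4, 5, 7). This omits Case 6: Bob $2$-decodes to $|\ind_0| = |\ind_1| = \nxt-1$ and simply resends $\overline{\mes}$, neither committing nor advancing $\nxt$ nor moving to Phase 2. The adversary can keep Bob in Case 6 for as many chunks as it likes by erasing about $\frac12$ of Alice's message (so Bob $2$-decodes rather than $1$-decodes) together with $\frac23$ of Bob's message (so Alice never hears the new $\mes$ and cannot lengthen $\ind$). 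This costs only $\frac12 p + \frac23 \cdot \frac{3p}{8} = \frac34 p$ per chunk --- not $\frac34$ of Alice's message alone --- so the toll you actually need is $\frac34 p$ bits drawn from \emph{both} directions of the chunk, and Case 6 is exactly where the Bob-side quarter becomes indispensable. Your proposed potential $2\nxt + \mathbf 1[\text{Alice caught up}]$ is precisely the right repair: when Alice decodes Bob and catches up, the indicator flips even though $\nxt$ does not, and when the potential stalls the adversary has paid the Bob-side $\frac{p}{4}$ on top of the Alice-side $\frac{p}{2}$. But you should recognize that this is the crux of the Phase-1 count, not a phase-transition boundary nicety as you frame it.

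For comparison, the paper does not build a potential. It fixes two markers --- $U$, the end of Phase 0, and $R$, the first Phase-2 chunk whose Bob-message Alice decodes --- and counts directly: among the chunks in $(U,R]$ with $<\frac34$ erasures on Alice and $j\notin\BAD$, at most $\log n$ can have Alice successfully decoding Bob (each such event appends a bit to $\ind$, and $|\ind|\le\log n$), so all the rest carry $\ge\frac23$ erasures on Bob's side; after $R$ Alice is sending a constant bit and the adversary must erase her entire $p$-bit message. One summed inequality over the three segments then yields the $(\frac34 - O(\epsilon))pT$ lower bound. Your potential argument would reach the same conclusion once the Case-6 accounting is made explicit, and it arguably organizes the invariants more cleanly; the paper's route is shorter because it replaces the whole potential machinery with the single observation that Alice can decode Bob at most $\log n$ times before $R$.
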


\begin{proof}
    We first analyze the communication complexity. There are $T = \frac{m}{\epsilon \alpha}$ chunks, each of which has $p + \frac{3p}{8} = \frac{11p}{8}$ bits sent. We have that $m = O_\epsilon(n)$ and $p = O_\epsilon(\alpha + \log n)$ by Theorem~\ref{thm:ECC}. Since $\alpha = \theta(\log n)$, it holds that $p = O_\epsilon(\alpha)$, so that the total number of bits sent is 
    \[
        \frac{m}{\epsilon \alpha} \cdot \frac{11p}{8} 
        = \frac{O_\epsilon(n)}{\alpha} \cdot O_\epsilon(\alpha) 
        = O_\epsilon(n).
    \] 
    
    Now, we show erasure resilience. Suppose for the sake of contradiction that Bob outputs an incorrect value of $x$. We will show that the adversary must've corrupted more than $\frac6{11} - O(\epsilon)$ of the communicated bits.
    
    First, we claim that if Alice ever uniquely decodes Bob's message in chunk $R$ once he is in Phase 2, and Bob hears at least one bit from Alice in a chunk after $R$, then Bob will output the correct value of Alice's input $x$. To see this, note that if Bob enters Phase 2 with $\fin = 3$, then this means that $\nxt \le \log n - 1$ and Bob just 2-decoded Alice's message such that w.l.o.g. $|\ind_0| = \nxt-1$ and $|\ind_1| = \nxt$ (it's not possible for $|\ind_0| = \log n - 1$ and $|\ind_1| = \log n$ since that requires that at some previous point $|\ind_0| = |\ind_1| = \log n - 1$, at which point Bob transitions to Phase 2 with $\fin = \mes \in \{ 0, 1, 2 \}$). Then, when Alice unambiguously hears a $\bar{3}$, she sends the bit $|\ind|~\text{mod}~2$ for the rest of the protocol, which allows Bob to determine whether Alice had $\ind_0$ or $\ind_1$ and thus whether her input were $\hat{x}_0$ or $\hat{x}_1$. If Bob enters Phase 3 with $\fin \in \{ 0, 1, 2 \}$, it must be the case that $\nxt = \log n$ and that in the previously 2-decoded message, $|\ind_0| = |\ind_1| = \log n - 1$. Then, if Alice receives Bob's Phase 3 message, she learns the last bit of $i$ and switches to sending $x[i]$ for the rest of the protocol. Since $\hat{x}_0[i] \not= \hat{x}_1[i]$, Bob can use this to distinguish between $\hat{x}_0$ and $\hat{x}_1$.
    
    Let $R$ be the first chunk in which Alice uniquely decodes Bob's message while he's in Phase 2, and if such a chunk does not exist then let $R=T$. The argument above implies that if Bob outputs the incorrect value of $x$, it must be the case that none of Alice's messages after chunk $R$ got through to Bob. Also let $U$ be the last chunk in which Bob is in Phase 1, so that $U = \frac{km}{\alpha} < R$ for some $k \in \bbN$.
    
    Since Bob did not transition to Phase 1 earlier, there must've been $\ge \frac34 - \frac94$ erasures in Alice's first $\frac{(k-1)m}{\alpha}$ messages. In Alice's $\frac{(k-1)m}{\alpha} + 1$'th to $\frac{km}{\alpha}$'th messages, since there were more than one consistent value of $x$, there must've been at least $\frac12 - \frac32 \epsilon$ erasures. This mean that within the first $U$ chunks, there are at least 
    \begin{align*}
        \left( \frac34 - \frac94 \epsilon \right)  \cdot p \cdot \left( U - \frac{m}{\alpha} \right) + \left( \frac12 - \frac32 \epsilon \right) \cdot p \cdot \frac{m}{\alpha}
        &= \left( \frac34 - \frac94 \epsilon \right) \cdot p \cdot U - \left( \frac14 - \frac34 \epsilon \right) \cdot p \cdot \frac{m}{\alpha} \\
        &= \left( \frac34 - \frac94 \epsilon \right) \cdot p \cdot U - \left( \frac14 - \frac34 \epsilon \right) \cdot p \cdot \epsilon T
    \end{align*}
    erasures.
    
    In the ($U$+1)'th to $R$'th chunks, Alice sends $\ECC(\ECC'(x)[j\alpha, (j+1)\alpha-1], \ind)$. Since we assumed that Bob outputs $\hat{x} \not= x$, it must be the case that none of these messages where $j \not\in \BAD$ can be uniquely decoded, and in particular at least $\frac12 - \epsilon$ of each of Alice's messages where $j \not\in \BAD$ must be erased, otherwise Bob uniquely decodes Alice's message and sets $\hat{x}$ correctly. Let $S$ be the number of these $R-U$ chunks in which at least $\frac34 - \frac32 \epsilon$ of Alice's bits are erased. In the other $R - U - S$ chunks, either $j \in \BAD$ or between $\frac12-\epsilon$ and $\frac34-\frac32\epsilon$ of Alice's message is erased. There are at most $\frac{\epsilon m}{\alpha} \cdot \left\lceil \frac{R-U}{m/\alpha} \right\rceil \le \epsilon (R - U) + \frac{\epsilon m}{\alpha} = \epsilon(R - U) + \epsilon T$ chunks among these with $j \in \BAD$. Then there are $\ge (1-\epsilon) (R - U) - \epsilon T - S$ chunks such that between $\frac12 - \epsilon$ and $\frac34 - \frac32 \epsilon$ of Alice's message is erased. We argue that in at most $\log n$ of these chunks, Bob's messages to Alice have a unique decoding. This is the case since whenever Alice uniquely decodes Bob's message, she appends one bit to $\ind$, or Bob's message was a $\bar{3}$ (in which case $|\ind| < \log n$), and $|\ind| \le \log n$. In the other $\ge (1-2\epsilon)(R - U) - S - \log n$ chunks, Bob's message to Alice is at least $\frac23$ corrupted. This gives a total number of erased bits of
    \begin{align*}
        & \ge \left( \frac34 - \frac94 \epsilon \right) \cdot p \cdot U - \left( \frac14 - \frac34 \epsilon \right) \cdot p \cdot \epsilon T
        + \left( \frac34 - \frac32\epsilon \right) \cdot p \cdot S + p \cdot (T - R) \\
        & + \left( \frac12 - \epsilon \right) \cdot p \cdot \left( (1-\epsilon)(R - U) - \epsilon T - S \right) + \frac23 \cdot \frac{3p}{8} \cdot \left( (1 - \epsilon)(R - U) - \epsilon T - S - \log n \right) \\
        &\ge \left( 1 - \epsilon \right) p \cdot T - \left( \frac14 + \frac74\epsilon \right) \cdot p \cdot R - \left( \frac12 \epsilon + \epsilon^2 \right) p \cdot U - \frac12 \epsilon p \cdot S - \frac14 p \cdot \log n \\
        &\ge \left( 1 - \epsilon \right) p \cdot T - \left( \frac14 + \frac74\epsilon \right) \cdot p \cdot T - \left( \frac12 \epsilon + \epsilon^2 \right) p \cdot T - \frac12 \epsilon p \cdot T - \frac14 p \cdot \epsilon T \\
        &= \left( \frac34 - O(\epsilon) \right) p \cdot T,
    \end{align*}
    where we used that $R, U, S \le T$ and $\log n = o(T)$.
    
    In the whole protocol, there are $\frac{11}{8} p \cdot T$ bits communicated, so the fraction of bits the adversary must've erased is $\frac{\frac34 - O(\epsilon)}{11/8} = \frac{6}{11} - O(\epsilon)$.
\end{proof}

\bibliographystyle{alpha}
\bibliography{refs}

\begin{thebibliography}{GKZ21}

\bibitem[GKZ21]{GuptaTZ21}
Meghal Gupta, Yael~Tauman Kalai, and Rachel~Yun Zhang.
\newblock Interactive error correcting codes over binary erasure channels
  resilient to $>\frac12$ adversarial corruption, 2021.

\bibitem[GS00]{GuruswamiS00}
Venkatesan Guruswami and Madhu Sudan.
\newblock List decoding algorithms for certain concatenated codes.
\newblock In {\em Proceedings of the Thirty-Second Annual ACM Symposium on
  Theory of Computing}, STOC '00, page 181–190, New York, NY, USA, 2000.
  Association for Computing Machinery.

\bibitem[Gur03]{Guruswami03}
V.~Guruswami.
\newblock List decoding from erasures: bounds and code constructions.
\newblock {\em IEEE Transactions on Information Theory}, 49(11):2826--2833,
  2003.

\bibitem[Ham50]{Hamming50}
R.~W. Hamming.
\newblock Error detecting and error correcting codes.
\newblock {\em The Bell System Technical Journal}, 29(2):147--160, 1950.

\bibitem[Sha48]{Shannon48}
C.~E. Shannon.
\newblock A mathematical theory of communication.
\newblock {\em The Bell System Technical Journal}, 27(3):379--423, 1948.

\end{thebibliography}

\end{document}